\theoremstyle{plain}
\numberwithin{equation}{section}
\newtheorem{thm}{Theorem}[section]
\newtheorem{lem}[thm]{Lemma}
\newenvironment{exam}[1]%  be sure to add \hfill\qedsymbol before \end{exam}
{\begin{flushleft}\textbf{Example #1}.\enspace}%
{\end{flushleft}}
\newcounter{cond}
\newcommand{\complex}{{\mathbb C}}
\newcommand{\real}{{\mathbb R}}
\newcommand{\underlambda}{\underline{\lambda}}
\newcommand{\overc}{\overline{c}}
\newcommand{\rmtr}{\mathrm{tr\,}}
\newcommand{\rmdiag}{\mathop{\rm diag}}
\newcommand{\lscript}{\mathcal{L}}
\newcommand{\pscript}{\mathcal{P}}
\newcommand{\ab}[1]{\left|#1\right|}
\newcommand{\doubleab}[1]{\left|\left|#1\right|\right|}
\newcommand{\brac}[1]{\left\{#1\right\}}
\newcommand{\paren}[1]{\left(#1\right)}
\newcommand{\sqbrac}[1]{\left[#1\right]}
\newcommand{\elbows}[1]{{\left\langle#1\right\rangle}}
\begin{document}

\title{SPOOKY ACTION AT A DISTANCE}
\author{Stan Gudder\\ Department of Mathematics\\
University of Denver\\ Denver, Colorado 80208\\
sgudder@du.edu}
\date{}
\maketitle

\begin{abstract}
This article studies quantum mechanical entanglement. We begin by illustrating why entanglement implies action at a distance. We then introduce a simple criterion for determining when a pure quantum state is entangled. Finally, we present a measure for the amount of entanglement for a pure state.
\end{abstract}

\section{Quantum Mechanics in a Nutshell}  % Section 1
Entanglement is an important concept in quantum theory and many scientists believe it is responsible for much of the weirdness and nonintuitive nature of this theory \cite{hhhh09,lh00,nc00}. Albert Einstein called entanglement ``spooky action at a distance'' and there are many people who agree. As strange as it may be, entanglement is a useful resource and it is the underlying basis for the speed and power of quantum computation \cite{gud03,nc00,wc98}.

We begin with a nutshell summary of quantum mechanics. The basic framework consists of a complex Hilbert space $H$ with inner product $\elbows{\phi,\psi}$ and the set of (bounded) linear operators $\lscript (H)$ on $H$. For simplicity, we shall assume that $H$ is finite dimensional in which case $\lscript (H)$ is represented by a set of complex matrices. This is general enough to include the theory of quantum computation and information \cite{gud03,hz12,nc00,wc98}. It also has the advantage of making this article accessible to anyone who has had a first course in linear algebra. In the sequel, when we discuss a quantum state, we shall mean a pure state. There is a more general concept of mixed states, but we only mention these briefly.

A simplified version of the main axioms of quantum mechanics are the following \cite{gud03,hz12,nc00}.

\begin{list} {(\arabic{cond})}{\usecounter{cond}
\setlength{\rightmargin}{\leftmargin}}
%(1)
\item The states of a quantum system are represented by unit vectors in $H$.
%(2)
\item The quantum events are represented by projections on $H$.
%(3)
\item If $\psi\in H$ is a state and $P\in\lscript (H)$ is a projection, then
\begin{equation*}
\pscript _\psi (P)=\elbows{\psi ,P\psi}
\end{equation*}
is the probability that $P$ occurs in the state $\psi$ and if $P$ does occur, then $\psi$ is updated to the state
\begin{equation*}
\psi '=P\psi/\doubleab{P\psi}
\end{equation*}
%(4)
\item If $H_1,H_2$ represent two interacting quantum systems, then the combined system is represented by the tensor product $H_1\otimes H_2$.
%(5)
\item If $P_1,P_2$ are events in system~1 and 2, respectively, then $P_1\otimes I_2$, $I_1\otimes P_2$ are the corresponding events in the combined system, where $I_1\in\lscript (H_1)$, $I_2\in\lscript (H_2)$ are the identity operators.
\end{list}

We now briefly elaborate on these axioms. Quantum mechanics can be thought of as a generalized probability theory \cite{gud03,hz12,lh00}. A unit vector $\psi\in H$ satisfies
\begin{equation*}
\doubleab{\psi}^2=\elbows{\psi ,\psi}=1
\end{equation*}
and $\psi$ gives a quantum probability measure in accordance with (3). Recall that $P\in\lscript (H)$ is a \textit{projection} if
$P^2=P=P^*$. Axiom~(3) connects the abstract concepts of states and events to the outcomes of experiments in the laboratory. Quantum mechanics cannot make precise predictions, it can only produce probabilities for the occurrence of events. The zero operator~$0$ represents the event that never occurs and $I$ represents the event that always occurs. An event $P$ occurs if and only if its \textit{complement} $P'=I-P$ does not occur and
\begin{equation*}
\pscript _\psi (P')=\elbows{\psi ,(I-P)\psi}=\elbows{\psi,\psi}-\elbows{\psi ,P\psi}=1-\pscript _\psi (P)
\end{equation*}

The updating $\psi\mapsto\psi '$ in Axiom~(3) is sometimes called the ``collapse'' of the state upon performing a measurement. This corresponds to the ``ontic'' viewpoint in which $\psi$ is considered to be a real, physical object \cite{lh00}. Another way of viewing this is that once the occurrence of $P$ is confirmed, this information gives a ``more precise'' state $\psi '$. We then call
$\psi '$ the state $\psi$ \textit{conditioned on the occurrence} of the event $P$, which is similar to a conditional probability of ordinary statistics. This corresponds to the ``epistemic'' viewpoint in which $\psi$ is not considered to be a physical object but is only a carrier of our knowledge of the system \cite{lh00}. One can employ either of these philosophies and still get the same quantum predictions. Once we're in the state $\psi '$, then $P$ must occur because Axiom~(3) gives
\begin{equation*}
\pscript_{\psi '}(P)=\elbows{\psi ',P\psi '}=\frac{1}{\doubleab{P\psi}^2}\,\elbows{P\psi ,P^2\psi}
  =\frac{1}{\doubleab{P\psi}^2}\,\elbows{P\psi ,P\psi}=1
\end{equation*}
Notice that there is no problem with dividing by $\doubleab{P\psi}$ because if $P$ has occurred, then
\begin{equation*}
\doubleab{P\psi}^2=\elbows{P\psi ,P\psi}=\elbows{\psi ,P\psi}=\pscript _\psi (P)\ne 0
\end{equation*}
Also notice that since $P$ or $P'$ must occur, the state $\psi$ is updated to $P\psi/\doubleab{P\psi}$ or
$P'\psi/\doubleab{P'\psi}$ when $P$ is tested.

The tensor product $H_1\otimes H_2$ is an important way to combine two Hilbert spaces. If $\brac{\phi _i}$, $\brac{\psi _j}$ are orthonormal bases for $H_1,H_2$, respectively, then by definition, any vector in $H_1\otimes H_2$ has the form
\begin{equation}                % equation (1.1)
\label{eq11}
\gamma =\sum _{i,j}c_{ij}\phi _i\otimes\psi _j,\quad c_{ij}\in\complex
\end{equation}
The main properties of $H_1\otimes H_2$ are that $\phi\otimes\psi$ is linear in both arguments and that
\begin{equation*}
\elbows{\alpha _1\otimes\beta _1,\alpha _2\otimes\beta _2}=\elbows{\alpha _1,\alpha _2}\elbows{\beta _1,\beta _2}
\end{equation*}
It follows that $\gamma$ in \eqref{eq11} is a state if and only if $\sum\ab{c_{ij}}^2=1$. If $A\in\lscript (H_1)$, $B\in\lscript (H_2)$ then $A\otimes B\in\lscript (H_1\otimes H_2)$ and
\begin{equation*}
A\otimes B(\alpha\otimes\beta )=A\alpha\otimes B\beta
\end{equation*}
and any operator on $H_1\otimes H_2$ has the form $\sum c_{ij}A_i\otimes B_j$, $c_{ij}\in\complex$. If $P_1$ is an event in system~1, then $P_1\otimes I_2$ is the corresponding event in the combined system because $P_1\otimes I_2$ occurs in $H_1\otimes H_2$ if and only if $P_1$ occurs in $H_1$. Another way of describing Axiom~(5) is that if $P_1$ is an event in system~1 and we test whether $P_1$ occurs in the combined system, then this test should not be affected by system~2. This statement is made precise in the next lemma where $\overc$ denotes the complex conjugate of $c\in\complex$. Also, note that we use the physics convention that the inner product is anti-linear in the first argument.

\begin{lem}    % Lemma 1.1
\label{lem11}
If $\gamma\in H_1\otimes H_2$ is a state and $P_1\in\lscript (H_1)$ is an event, then there exist states $\alpha _i\in H_1$ and
$\lambda _i\in\sqbrac{0,1}$ with $\sum\lambda _i=1$ such that
\begin{equation}                % equation (1.2)
\label{eq12}
\elbows{\gamma ,P_1\otimes I_2\gamma}=\sum\lambda _i\elbows{\alpha _i,P_1\alpha _i}
\end{equation}
\end{lem}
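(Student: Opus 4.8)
The plan is to expand $\gamma$ in the given orthonormal bases and to reduce the left-hand side of \eqref{eq12} to a sum over the second tensor index. Writing $\gamma=\sum_{i,j}c_{ij}\,\phi_i\otimes\psi_j$ as in \eqref{eq11}, I would first apply $P_1\otimes I_2$ term by term, using $A\otimes B(\alpha\otimes\beta)=A\alpha\otimes B\beta$, to obtain $P_1\otimes I_2\gamma=\sum_{i,j}c_{ij}\,P_1\phi_i\otimes\psi_j$. Expanding $\elbows{\gamma,P_1\otimes I_2\gamma}$ and invoking the tensor-product rule $\elbows{\alpha_1\otimes\beta_1,\alpha_2\otimes\beta_2}=\elbows{\alpha_1,\alpha_2}\elbows{\beta_1,\beta_2}$, the factor $\elbows{\psi_l,\psi_j}=\delta_{lj}$ collapses the second index and leaves
\[
\elbows{\gamma,P_1\otimes I_2\gamma}=\sum_j\sum_{i,k}\overline{c_{kj}}\,c_{ij}\,\elbows{\phi_k,P_1\phi_i}.
\]

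The next step is to recognize the inner double sum as a single expectation in $H_1$. For each $j$ I would set $\beta_j=\sum_i c_{ij}\,\phi_i\in H_1$, so that the displayed right-hand side becomes $\sum_j\elbows{\beta_j,P_1\beta_j}$. The only subtlety is that the $\beta_j$ need not be unit vectors, hence are not yet states. I would therefore put $\lambda_j=\doubleab{\beta_j}^2=\sum_i\ab{c_{ij}}^2$ and, for each $j$ with $\lambda_j\neq 0$, define the state $\alpha_j=\beta_j/\doubleab{\beta_j}$; then $\elbows{\beta_j,P_1\beta_j}=\lambda_j\elbows{\alpha_j,P_1\alpha_j}$ by bilinearity of the inner product.

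It remains to check the weights and dispose of the degenerate terms. Since $\gamma$ is a state, $\sum_{i,j}\ab{c_{ij}}^2=1$, whence $\sum_j\lambda_j=1$ and each $\lambda_j\in\sqbrac{0,1}$. For any index $j$ with $\lambda_j=0$ we have $\beta_j=0$, so the term $\elbows{\beta_j,P_1\beta_j}$ vanishes and may simply be omitted (or $\alpha_j$ assigned an arbitrary state, since its coefficient is $0$). Collecting the surviving terms then yields \eqref{eq12}. I do not anticipate a genuine obstacle here; the one point requiring care is the normalization bookkeeping — separating the norm $\lambda_j$ of $\beta_j$ from its direction $\alpha_j$, and confirming that the state condition on $\gamma$ forces the $\lambda_j$ to form a probability vector.
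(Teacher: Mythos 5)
Your argument is essentially identical to the paper's proof: expand $\gamma$ in the product basis, collapse the second index, and regroup the coefficients column-by-column into normalized states $\alpha_j$ with weights $\lambda_j=\sum_i\ab{c_{ij}}^2$. If anything, your version is slightly more careful than the paper's, since you normalize $\beta_j$ by $\doubleab{\beta_j}=\paren{\sum_i\ab{c_{ij}}^2}^{1/2}$ rather than by $\sum_i\ab{c_{ij}}^2$ and you explicitly dispose of the indices with $\lambda_j=0$.
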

\begin{proof}
The state $\gamma$ has the form \eqref{eq11} with $\sum\ab{c_{ij}}^2=1$. We then have that
\begin{align*}
\elbows{\gamma ,P_1\otimes I_2\gamma}
  &=\elbows{\sum c_{ij}\phi _i\otimes\psi _j,P_1\otimes I_2\sum c_{rs}\phi _r\otimes\psi _s}\\
  &=\sum\overc _{ij}\sum c_{rs}\elbows{\phi _i\otimes\psi _j,P_1\phi _r\otimes\psi _s}\\
  &=\sum _{i,j,r}\overc _{ij}c_{rj}\elbows{\phi _i,P_1\psi _r}
\end{align*}
Letting
\begin{equation*}
\alpha _j=\sum _ic_{ij}\phi _i/\sum _i\ab{c_{ij}}^2
\end{equation*}
and $\lambda _j=\sum _i\ab{c_{ij}}^2$ we conclude that $\alpha _j$ are states in $H_1$, $\sum\lambda _j=1$ and that
\eqref{eq12} holds.
\end{proof}

Equation~\eqref{eq12} says that there are states $\alpha _i\in H_1$ such that $\pscript _\gamma (P_1\otimes I_2)$ is a convex combination of $\pscript _{\alpha _i}(P_1)$. In this way, a test of the occurrence of $P_1\otimes I_2$ only depends on states in
$H_1$ and is independent of system~2. The function $P_1\mapsto\sum\lambda _i\elbows{\alpha _i,P_1\alpha _i}$ in \eqref{eq12} is called a \textit{mixed state} \cite{hz12,nc00} and we shall not pursue these further.

We now come to our main definition. A state $\gamma\in H_1\otimes H_2$ is \textit{factorized} (or a \textit{product state}) if
$\gamma =\alpha\otimes\beta$ for states $\alpha\in H_1$, $\beta\in H_2$. Otherwise, $\gamma$ is \textit{entangled}. When
$\gamma =\alpha\otimes\beta$, we call $\alpha$ and $\beta$ the \textit{local parts} of $\gamma$. The local parts are not unique because if $\alpha$ and $\beta$ are local parts, then so are $e^{i\theta}\alpha$, $e^{-i\theta}\beta$ where $\theta\in\real$,
$i=\sqrt{-1\,}$.

\begin{exam}{1}  % Example 1
If $\alpha ,\beta$ are orthogonal states in $H$, then
\begin{equation}                % equation (1.3)
\label{eq13}
\delta=\tfrac{1}{\sqrt{2\,}}\,(\alpha\otimes\beta -\beta\otimes\alpha )
\end{equation}
is an entangled state in $H\otimes H$. To show this suppose that $\delta =\phi\otimes\psi$ for states $\phi ,\psi\in H$. Taking the inner product with $\alpha\otimes\alpha$ gives $\elbows{\alpha ,\phi}\elbows{\alpha ,\psi}=0$ so $\elbows{\alpha ,\phi}=0$ or 
$\elbows{\alpha ,\psi}=0$. Suppose $\elbows{\alpha ,\phi}=0$ and take the inner product with $\alpha\otimes\beta$ to obtain
\begin{equation*}
\tfrac{1}{\sqrt{2\,}}=\elbows{\alpha ,\phi}\elbows{\beta ,\psi}
\end{equation*}
which gives a contradiction. If $\elbows{\alpha ,\psi}=0$, take the inner product with $\beta\otimes\alpha$ to again get a contradiction. Hence $\delta$ is entangled.\hfill\qedsymbol
\end{exam}

\section{Action at a Distance}  % Section 2
Alice and Bob prepare an interacting pair of electrons in a state $\gamma$ at a lab in New York; Alice keeps her electron (system~1) in New York and Bob takes his electron (system~2) to a lab on the moon. Since all electrons have the same properties, both systems give a copy of a Hilbert space $H$ and the combined system has Hilbert space $H\otimes H$. For this experiment, $\gamma$ is the factorized state $\alpha\otimes\beta$ where $\alpha ,\beta$ are orthogonal states in $H$. Let $P$ be an event that pertains to an electron (say, spin-up in the $z$-direction). Alice sends her electron through an apparatus that tests $P$. She confirms that $P$ occurs so by Axioms~(3) and (5), $\gamma$ updates to
\begin{equation*}
\gamma '=\frac{P\otimes I(\alpha\otimes\beta )}{\doubleab{P\otimes I(\alpha\otimes\beta )}}
  =\frac{P\alpha\otimes\beta}{\doubleab{P\alpha}}
\end{equation*}
It is fairly clear that Bob's electron is unaffected. To make sure, suppose Bob tests an event $Q$ pertaining to his electron. According to $\gamma$, we have
\begin{equation*}
\pscript _\gamma (I\otimes Q)=\elbows{\alpha\otimes\beta ,I\otimes Q(\alpha\otimes\beta )}
  =\elbows{\alpha\otimes\beta ,\alpha\otimes Q\beta}=\elbows{\beta ,Q\beta}
\end{equation*}
Moreover, we obtain
\begin{align*}
\pscript _{\gamma '}(I\otimes Q)&=\frac{1}{\doubleab{P\alpha}^2}\elbows{P\alpha\otimes\beta ,(I\otimes Q)(P\alpha\otimes\beta )}\\
  \noalign{\medskip}
  &=\frac{1}{\doubleab{P\alpha}^2}\elbows{P\alpha\otimes\beta ,P\alpha\otimes Q\beta}=\elbows{\beta ,Q\beta}
\end{align*}
We conclude that Bob's electron is in the same state after Alice's measurement as it was before.

Next Bob returns to New York where he and Alice prepare a pair of electrons in the entangled state $\delta$ of \eqref{eq13}. Alice keeps her electron in New York and Bob again takes his to the moon. Now Alice performs her experiment that tests $P$ and confirms that $P$ occurs. (It actually doesn't matter if $P$ occurs. If it doesn't, then $P'$ occurs and our conclusion will be the same.) The updated state becomes
\begin{equation*}
\delta '=\frac{(P\otimes I)\delta}{\doubleab{(P\otimes I)\delta}}
\end{equation*}
We have that
\begin{align*}
\doubleab{(P\otimes I)\delta}^2
  &=\tfrac{1}{2}\elbows{P\alpha\otimes\beta -P\beta\otimes\alpha ,P\alpha\otimes\beta -P\beta\otimes\alpha}\\
  &=\tfrac{1}{2}\paren{\elbows{\alpha ,P\alpha}+\elbows{\beta ,P\beta}}
\end{align*}
Letting $N=\tfrac{1}{\sqrt{2\,}}\,\paren{\elbows{\alpha ,P\alpha}+\elbows{\beta ,P\beta}}^{1/2}$ we obtain
\begin{equation*}
\delta '=\tfrac{1}{N}\,(P\otimes I)\delta =\tfrac{1}{N}\,(P\alpha\otimes\beta -P\beta\otimes\alpha )
\end{equation*}
It appears as if the state of Bob's electron is instantaneously changed without Bob doing anything. This is the spooky action at a distance that bothered Einstein and others. This effect in various forms really happens because it has been exhibited in thousands of experiments around the world. It does not violate special relativity which postulates that no signal or object can move faster than the speed of light. This is because further study shows that this action cannot relay any useable information or communication \cite{bus03,hhhh09,lh00}.

In order to be sure about this, let us show that the state of Bob's electron has indeed been changed. As before, let $Q$ be an event for Bob's electron. Before Alice made her measurement, the joint state was $\delta$ and the probability that $Q$ occurs is
\begin{align}                % equation (2.1)
\label{eq21}
\elbows{\delta ,I\otimes Q\delta}&=\tfrac{1}{2}\elbows{\alpha\otimes\beta -\beta\otimes\alpha ,\alpha\otimes Q\beta -\beta\otimes Q\alpha}\notag\\
&=\tfrac{1}{2}\,\paren{\elbows{\beta ,Q\beta}+\elbows{\alpha ,Q\alpha}}
\end{align}
After Alice makes her measurement, the joint state is $\delta '$ and the probability that $Q$ occurs becomes
\begin{align}                % equation (2.2)
\label{eq22}
\elbows{\delta ',I\otimes Q\delta '}
  &=\tfrac{1}{N^2}\,\paren{P\alpha\otimes\beta -P\beta\otimes\alpha ,P\alpha\otimes Q\beta -P\beta\otimes Q\alpha}\notag\\
  &=\tfrac{1}{N^2}\,\left(\elbows{\alpha ,P\alpha}\elbows{\beta ,Q\beta}-\elbows{\alpha ,P\beta}\elbows{\beta ,Q\alpha}\right.\\
  &\qquad\left. -\elbows{\beta ,P\alpha}\elbows{\alpha ,Q\beta}+\elbows{\beta ,P\beta}\elbows{\alpha ,Q\alpha}\right)\notag
\end{align}
We see that \eqref{eq21} and \eqref{eq22} are definitely different. As a simple example, suppose $P=Q=P_\alpha$ is the
one-dimensional projection given by $P_\alpha\psi =\elbows{\alpha ,\psi}\alpha$ for all $\psi\in H$. Then \eqref{eq21} is $1/2$ while \eqref{eq22} is zero. We conclude that Alice's measurement on her electron has instantaneously altered Bob's electron.

\section{Entangled or Not Entangled}  % Section 3
This section presents a simple criterion that determines whether a state is entangled or not \cite{gud191}.

\begin{exam}{2}  % Example 2
It is not so easy to tell whether a state is entangled. Let $\brac{\phi _i}$, $\brac{\psi _i}$, $i=1,2,3$, be orthonormal bases for $H_1,H_2$, respectively and form the joint state $\psi\in H_1\otimes H_2$ given by
\begin{align*}
\psi&=\tfrac{1}{N}\,(4\phi _1\otimes\psi _1-3i\phi _1\otimes\psi _2+5\phi _1\otimes\psi _3
  -8\phi _2\otimes\psi _1+6i\phi _2\otimes\psi _2\\
  &\qquad -10\phi _3\otimes\psi _3+12\phi _3\otimes\psi _1-9i\phi _3\otimes\psi _2+15\phi _3\otimes\psi _3)
\end{align*}
where $i=\sqrt{-1\,}$ and $N=10\sqrt{7\,}$ is the norm of the vector in parentheses. Is $\psi$ factorized and if it is, what are the local parts of $\psi$? If you can answer this outright, you're better than I. I prefer to use the following result \cite{gud191,gud192}.
\hfill\qedsymbol
\end{exam}

\begin{thm}    % Theorem 3.1
\label{thm31}
Let $\brac{\phi _1,\phi _2,\ldots ,\phi _m},\brac{\psi _1,\psi _2,\ldots ,\psi _n}$ be orthonormal bases for $H_1,H_2$, respectively and let
\begin{equation*}
\psi =\sum c_{ij}\phi _i\otimes\psi _j\in H_1\otimes H_2
\end{equation*}
be a state, where $\sum c_{ij}\ne 0$. Then $\psi$ is factorized if and only if for all $i=1,2,\ldots ,m$, $j=1,2,\ldots ,n$ we have that
\begin{equation}                % equation (3.1)
\label{eq31}
c_{ij}\sum _{i,j}c_{ij}=\sum _jc_{ij}\sum _ic_{ij}
\end{equation}
Moreover, if $\psi$ is factorized, then the local parts are $\alpha /\doubleab{\alpha}$, $\beta /\doubleab{\beta}$ where
$\alpha =\sum a_i\phi _i$, $\beta =\sum b_j\psi _j$, $a_i=\tfrac{1}{c}\,\sum _jc_{ij}$, $b_j=\sum _ic_{ij}$, $c=\sum _{ij}c_{ij}$.
\end{thm}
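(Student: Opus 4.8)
The plan is to recognize that \eqref{eq31} is nothing but the statement that the coefficient matrix $(c_{ij})$ has rank one, and that a state $\psi$ is factorized precisely when this matrix has rank one. To make the bookkeeping transparent I would first abbreviate the row sums, column sums, and total sum by writing $r_i=\sum _jc_{ij}$, $s_j=\sum _ic_{ij}$, and $c=\sum _{ij}c_{ij}$, so that \eqref{eq31} reads simply $c_{ij}\,c=r_is_j$ for every $i,j$. Note that $c=\sum _ir_i=\sum _js_j$, and that $c\ne 0$ by hypothesis.

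For the forward implication I would assume $\psi =\alpha\otimes\beta$ with $\alpha =\sum _ia_i\phi _i$ and $\beta =\sum _jb_j\psi _j$. Expanding the tensor product by bilinearity gives $\psi =\sum _{ij}a_ib_j\,\phi _i\otimes\psi _j$, and since $\brac{\phi _i\otimes\psi _j}$ is an orthonormal basis of $H_1\otimes H_2$, comparing coefficients against \eqref{eq11} yields $c_{ij}=a_ib_j$. From this rank-one form the three sums factor as $r_i=a_i\sum _kb_k$, $s_j=b_j\sum _ka_k$, and $c=\paren{\sum _ka_k}\paren{\sum _kb_k}$, whence $r_is_j=a_ib_j\paren{\sum _ka_k}\paren{\sum _kb_k}=c_{ij}\,c$. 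This is exactly \eqref{eq31}.

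For the converse, the one genuinely substantive step, I would assume \eqref{eq31} and use $c\ne 0$ to reconstruct the factors. Setting $a_i=r_i/c$ and $b_j=s_j$ as in the statement, \eqref{eq31} immediately gives $c_{ij}=r_is_j/c=a_ib_j$, so that $\psi =\paren{\sum _ia_i\phi _i}\otimes\paren{\sum _jb_j\psi _j}=\alpha\otimes\beta$. It remains to replace $\alpha ,\beta$ by genuine unit vectors: since $\psi$ is a state and $\doubleab{\alpha\otimes\beta}=\doubleab{\alpha}\doubleab{\beta}$, we have $\doubleab{\alpha}\doubleab{\beta}=1$, and therefore $\psi =\paren{\alpha /\doubleab{\alpha}}\otimes\paren{\beta /\doubleab{\beta}}$. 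This both proves that $\psi$ is factorized and identifies the local parts as $\alpha /\doubleab{\alpha}$ and $\beta /\doubleab{\beta}$, matching the final assertion of the theorem.

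The part to watch is the role of the hypothesis $c\ne 0$: it is precisely what permits the division in $a_i=r_i/c$ and hence the recovery of a rank-one factorization from \eqref{eq31}. Without it the reconstruction breaks down, so I would flag that the criterion as stated applies only to states whose coefficients have nonzero total sum. Everything else — the bilinear expansion, the comparison of coefficients against the orthonormal product basis, and the normalization via $\doubleab{\alpha\otimes\beta}=\doubleab{\alpha}\doubleab{\beta}$ — is routine verification requiring no new ideas.
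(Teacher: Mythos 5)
Your proof is correct and follows essentially the same route as the paper: both directions reduce the factorization condition to the rank-one identity $c_{ij}=a_ib_j$, with the converse recovering $a_i=r_i/c$, $b_j=s_j$ and then normalizing (you via $\doubleab{\alpha}\doubleab{\beta}=1$, the paper via $\sum\ab{a'_i}^2\sum\ab{b'_j}^2=\sum\ab{c_{ij}}^2=1$, which is the same computation). Your explicit flagging of where $c\ne 0$ is used is a nice touch but does not change the argument.
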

\begin{proof}
We know that $\psi$ is factorized if and only if $\psi =\alpha\otimes\beta$ for some states $\alpha\in H_1$, $\beta\in H_2$. Let
$\alpha =\sum a_i\phi _i$ and $\beta =\sum b_j\psi _j$. It follows that
\begin{equation*}
\sum _{i,j}c_{ij}\phi _i\otimes\psi _j=\paren{\sum _ia_i\phi _i}\otimes\paren{\sum _jb_j\psi _j}=\sum _{i,j}a_ib_j\phi _i\otimes\psi _j
\end{equation*}
Hence, $\psi$ is factorized if and only if there exist sequences of complex numbers $\brac{a_i}$, $\brac{b_j}$, $i=1,2,\ldots ,m$,
$j=1,2,\ldots ,n$ such that $c_{ij}=a_ib_j$ where $\sum\ab{a_i}^2=\sum\ab{b_j}^2=1$. If \eqref{eq31} holds, letting
$c=\sum _{i,j}c_{ij}$, $a'_i=\tfrac{1}{c}\,\sum _jc_{ij}$, $b'_j=\sum _ic_{ij}$ we have that $c_{ij}=a'_ib'_j$. Since $\doubleab{\psi}=1$ we have that
\begin{equation*}
\sum _i\ab{a'_i}^2\sum _j\ab{b'_j}^2=\sum _{i,j}\ab{c_{ij}^2}=1
\end{equation*}
Letting $a_i=a'_i\Big/\sqrt{\sum\ab{a'_i}^2\,}$ and $b_j=b'_j\Big/\sqrt{\sum\ab{b'_j}^2}$ we obtain $c_{ij}=a_ib_j$ and
$\sum\ab{a_i}^2=\sum\ab{b_j}^2=1$. Hence, $\psi$ is factorized. Conversely, suppose that $\psi$ is factorized so there exist sequences $\brac{a_i}$, $\brac{b_j}$ with $c_{ij}=a_ib_j$. We conclude that
\begin{equation*}
\sum _jc_{ij}\sum _ic_{ij}=a_ib_j\sum _{i,j}a_ib_j=c_{ij}\sum _{i,j}c_{ij}
\end{equation*}
so \eqref{eq31} holds. The last sentence follows from our previous work.
\end{proof}

Theorem~3.1 gives a necessary and sufficient condition for factorizability under the condition that $\sum c_{ij}\ne 0$. A more complicated criterion than \eqref{eq31} gives such a characterization in terms of the $c_{ij}$ without this condition \cite{gud191}.
Theorem~4.1 will give another characterization.

\begin{exam}{3}  % Example 3
We use Theorem~3.1 to answer the question in Example~2. Except for the common factor $1/N$ we have that $c_{11}=4$,
$c_{12}=-3i$, $c_{13}=5$, $c_{21}=-8$, $c_{22}=6i$, $c_{23}=-10$, $c_{31}=12$, $c_{32}=-9i$, $c_{33}=15$. We then obtain
$\sum _{i,j}c_{ij}=6(3-i)$, $\sum _jc_{1j}=3(3-i)$, $\sum _jc_{2j}=6(i-3)$, $\sum _jc_{3j}=9(3-i)$, $\sum _ic_{i1}=8$, 
$\sum _ic_{i2}=-6i$, $\sum _ic_{i3}=0$. It is easy to check that \eqref{eq31} holds (the factor $1/N$ cancels from both sides). Also, the local parts become
\begin{equation*}
\alpha =\frac{1}{\sqrt{14\,}}\,(\phi _1-2\phi _2+3\phi _3),\quad\beta =\frac{1}{5\sqrt{2\,}}\,(4\psi _1-3i\psi _2+5\psi _3)
\rlap{$\qquad\qquad \Box$}
\end{equation*}
\end{exam}

The following Schmidt decomposition theorem \cite{hz12,wc98} is important in this work.

\begin{thm}    % Theorem 3.2
\label{thm32}
Any state $\psi\in H_1\otimes H_2$ has a \textit{Schmidt decomposition}
\begin{equation}                % equation (3.2)
\label{eq32}
\psi =\sum _{i=1}^r\sqrt{\lambda _i\,}\phi _i\otimes\psi_i
\end{equation}
where $\lambda _i>0$, $\sum\lambda _i=1$ and $\brac{\phi _i}$, $\brac{\psi _i}$ are orthonormal vectors in $H_1,H_2$, respectively.
\end{thm}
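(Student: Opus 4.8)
The plan is to reduce the Schmidt decomposition to the singular value decomposition (SVD) of the coefficient matrix. Fixing orthonormal bases $\brac{e_i}$ for $H_1$ and $\brac{f_j}$ for $H_2$, I would expand $\psi=\sum_{i,j}c_{ij}\,e_i\otimes f_j$ and assemble the coefficients into the $m\times n$ complex matrix $C=(c_{ij})$. The normalization $\doubleab{\psi}=1$ is precisely $\sum_{i,j}\ab{c_{ij}}^2=\rmtr(C^*C)=1$, so the numbers $\lambda_i$ in \eqref{eq32} will be the squares of the singular values of $C$, while the orthonormal families $\brac{\phi_i},\brac{\psi_i}$ will be built from the left and right singular vectors.

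First I would invoke the SVD: there exist an $m\times m$ unitary matrix $U$, an $n\times n$ unitary matrix $V$, and reals $\sigma_1\ge\sigma_2\ge\cdots\ge\sigma_r>0$ (the nonzero singular values, with $r=\mathrm{rank}\,C$) such that $c_{kl}=\sum_{i=1}^r\sigma_i\,U_{ki}\,\overline{V_{li}}$. I would then define the Schmidt vectors $\phi_i=\sum_k U_{ki}\,e_k$ and $\psi_i=\sum_l\overline{V_{li}}\,f_l$. Since the columns of a unitary matrix are orthonormal, and $\overline{V}$ is unitary whenever $V$ is, both $\brac{\phi_i}_{i=1}^r$ and $\brac{\psi_i}_{i=1}^r$ are orthonormal in $H_1$ and $H_2$ respectively. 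Substituting the SVD into the expansion of $\psi$ and collapsing the sums over $k,l$ then yields $\psi=\sum_{i=1}^r\sigma_i\,\phi_i\otimes\psi_i$. Setting $\lambda_i=\sigma_i^2>0$ puts this in the form \eqref{eq32}, and $\sum_i\lambda_i=\sum_i\sigma_i^2=\rmtr(C^*C)=1$ supplies the remaining condition.

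The real content is the SVD itself, so the main obstacle is whichever route one takes to produce it. To stay within the reach of a first linear-algebra course and use only the spectral theorem, I would instead diagonalize the Hermitian positive semidefinite matrix $CC^*$, take its strictly positive eigenvalues $\lambda_1,\ldots,\lambda_r$ with orthonormal eigenvectors (these define the $\phi_i$), and \emph{define} the second family by $\psi_i=\lambda_i^{-1/2}\sum_{k,l}\overline{U_{ki}}\,c_{kl}\,f_l$. On this route the crux is checking that $\brac{\psi_i}$ is orthonormal: one must compute $\elbows{\psi_i,\psi_j}$ and show it equals $\delta_{ij}$, and this is exactly where the eigenvalue equation for $CC^*$ and the strict positivity $\lambda_i>0$ (which licenses the division by $\sqrt{\lambda_i\,}$) are used. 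Once orthonormality is established, re-expanding $\sum_i\sqrt{\lambda_i\,}\,\phi_i\otimes\psi_i$ and using that the columns of $C$ lie in the range of $CC^*$ recovers $\psi$ and completes the proof.
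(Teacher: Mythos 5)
Your argument is correct, but note that the paper does not actually prove Theorem~\ref{thm32}: it states the Schmidt decomposition as a known result and cites the literature, so there is no in-paper proof to compare against. Your reduction to the singular value decomposition of the coefficient matrix $C$ is the standard route and is exactly the identification the paper itself relies on later, in the proof of Theorem~\ref{thm42}(a), where $C=UDV$ is invoked and the singular values are asserted to coincide with the $\sqrt{\lambda_i\,}$ of \eqref{eq32}; your computation $\sum_{k,l}c_{kl}\,e_k\otimes f_l=\sum_{i=1}^r\sigma_i\,\phi_i\otimes\psi_i$ is precisely the justification for that assertion. Your second, more elementary route --- diagonalizing $CC^*$, defining $\psi_i=\lambda_i^{-1/2}\sum_{k,l}\overline{U_{ki}}\,c_{kl}\,f_l$, and verifying $\elbows{\psi_i,\psi_j}=\delta_{ij}$ via the eigenvalue equation --- also checks out: the reconstruction step works because $\sum_{i=1}^r U_{ki}\overline{U_{k'i}}$ is the matrix of the orthogonal projection onto $\mathrm{ran}(CC^*)=\mathrm{ran}(C)$, which fixes every column of $C$. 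This self-contained spectral-theorem version is arguably the more valuable one in the context of this paper, since the article advertises itself as accessible to readers with only a first course in linear algebra.
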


In the Schmidt decomposition, the \textit{singular-values} $\sqrt{\lambda _i\,}$ are unique, $i=1,2,\ldots ,r$. It follows that $\psi$ is factorized if and only if $r=1$ in \eqref{eq32}. Why not just use this to test whether $\psi$ is factorized? One reason is that the Schmidt decomposition can be difficult to construct. Another reason is that Theorem~3.1 generalizes to multipartite systems (more than two parts) where no Schmidt decomposition is available \cite{gud191}.

\section{An Entanglement Measure}  % Section 4
We now present a measure of entanglement. Using this measure, we can decide how entangled a state is and when one state is more entangled than another. If a state $\psi\in H_1\otimes H_2$ has Schmidt decomposition
$\psi =\sum _{i=1}^r\sqrt{\lambda _i\,}\,\phi _i\otimes\psi _i$, $\lambda _i>0$, $\sum\lambda _i=1$, then
$\underlambda =(\lambda _1,\lambda _2,\ldots ,\lambda _r)$ is a probability distribution. The \textit{entanglement number} of $\psi$ is
\begin{equation}                % equation (4.1)
\label{eq41}
e(\psi )=\sqbrac{1-\sum _{i=1}^r\lambda _i^2}^{1/2}=\sqbrac{\sum_{i\ne j}\lambda _i\lambda _j}^{1/2}
  =\sqbrac{\sum _i\lambda _i(1-\lambda _i)}^{1/2}
\end{equation}
Relative to the distribution $\underlambda$, the last expression in \eqref{eq41} shows that $e(\psi )$ is the average deviation of
$\underlambda$ from $1$. Notice that $\psi$ is factorized if and only if $e(\psi )=0$ which is an important property for an entanglement measure.

There are various justifications for the definition \eqref{eq41} \cite{gud192}. One is that if the distribution $\underlambda$ is peaked near $1$, then $e(\psi )$ should be near $0$ and if $\lambda$ is spread fairly equally, then $e(\psi )$ should be large. For example, suppose $\psi$ has Schmidt decomposition
\begin{equation*}
\psi =\frac{\sqrt{99\,}}{10}\,\phi _1\otimes\psi _1+\frac{1}{10}\,\phi _2\otimes\psi _2
\end{equation*}
Then $\underlambda =\paren{\tfrac{99}{100},\tfrac{1}{100}}$ and $\psi$ has the dominate factorized term
$\tfrac{\sqrt{99\,}}{100}\,\phi _1\otimes\psi _1$ together with a very subordinate term $\tfrac{1}{10}\,\phi _2\otimes\psi _2$ so $e(\psi )$ should be small. Indeed,
\begin{equation*}
e(\psi )=\sqbrac{1-\paren{\frac{99}{100}}^2-\paren{\frac{1}{100}}^2}^{1/2}\approx 0.14
\end{equation*}

We call $r$ in the Schmidt decomposition of $\psi$ the \textit{index} of $\psi$ and write $n(\psi )=r$. We say that $\psi$ is
\textit{maximally entangled with index} $n(\psi )=r\ge 2$, if $\lambda _i=1/r$, $i=1,2,\dots ,r$. In this case, the distribution
$\underlambda$ is uniformly spread so the entanglement should be large. The next result verifies this and is a standard calculus maximization problem whose proof we leave to the reader.

\begin{thm}    % Theorem 4.1
\label{thm41}
$e(\psi )\le\sqbrac{\tfrac{n(\psi )-1}{n(\psi )}}^{1/2}$ and equality is achieved if and only if $\psi$ is maximally entangled with index $n(\psi )$.
\end{thm}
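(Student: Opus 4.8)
The plan is to set $r=n(\psi)$ and reduce the claimed bound to a statement purely about the probability distribution $\underlambda=(\lambda _1,\ldots ,\lambda _r)$. Since $e(\psi )=\sqbrac{1-\sum _{i=1}^r\lambda _i^2}^{1/2}$ and the square of the asserted upper bound is $(r-1)/r=1-1/r$, the inequality $e(\psi )\le\sqbrac{(r-1)/r}^{1/2}$ is equivalent to
\begin{equation*}
\sum _{i=1}^r\lambda _i^2\ge\frac{1}{r},
\end{equation*}
with the two equality conditions matching up. So everything reduces to minimizing $\sum\lambda _i^2$ over the open simplex $\lambda _i>0$, $\sum\lambda _i=1$.

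First I would prove this by Cauchy--Schwarz applied to the vectors $(\lambda _1,\ldots ,\lambda _r)$ and $(1,\ldots ,1)$, giving
\begin{equation*}
1=\paren{\sum _{i=1}^r\lambda _i}^2\le r\sum _{i=1}^r\lambda _i^2,
\end{equation*}
which is exactly $\sum\lambda _i^2\ge 1/r$. This is the quick route, and its advantage is that the equality condition comes for free: Cauchy--Schwarz is an equality precisely when $(\lambda _1,\ldots ,\lambda _r)$ is proportional to $(1,\ldots ,1)$, i.e.\ all $\lambda _i$ are equal, and together with $\sum\lambda _i=1$ this forces $\lambda _i=1/r$ for every $i$. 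By the definition given just before the theorem, that is exactly the assertion that $\psi$ is maximally entangled with index $r$.

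Since the statement advertises this as a calculus maximization, I would alternatively phrase it with Lagrange multipliers: minimizing $f(\underlambda )=\sum\lambda _i^2$ subject to $\sum\lambda _i-1=0$ yields $2\lambda _i=\mu$ for all $i$, so the unique interior critical point is $\lambda _i=1/r$. The one point I would be careful about is the optimality analysis, which I regard as the main (and only real) obstacle: because the constraint set $\lambda _i>0$ is open, one must argue that this single critical point is the global minimum rather than merely a stationary point. The cleanest way is to note that $f$ is strictly convex and the constraint is linear, so the critical point is the unique global minimizer on the closed simplex and that minimum is attained in the open interior; equivalently, one simply invokes the Cauchy--Schwarz equality condition above, which sidesteps the boundary subtleties entirely. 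For this reason I would present the Cauchy--Schwarz argument as the main proof and mention the Lagrange computation only as the ``calculus'' version the statement alludes to.
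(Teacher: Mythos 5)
Your proposal is correct, and it is worth noting that the paper itself supplies no proof here: it merely declares the result ``a standard calculus maximization problem whose proof we leave to the reader,'' implicitly pointing at the Lagrange-multiplier computation you sketch as your alternative. Your reduction of the bound to $\sum_{i=1}^r\lambda_i^2\ge 1/r$ on the simplex is exactly right, and your primary argument via Cauchy--Schwarz applied to $(\lambda_1,\ldots,\lambda_r)$ and $(1,\ldots,1)$ is cleaner than the calculus route: it delivers the inequality and the equality condition ($\lambda_i=1/r$ for all $i$) in one stroke, with no need to verify that an interior stationary point is a global minimizer. Your caution about the Lagrange version is also well placed --- on the open simplex a critical point is not automatically a global extremum, and one must invoke strict convexity of $\sum\lambda_i^2$ (or compactness of the closed simplex together with a check that the boundary cannot give a smaller value of $\sum\lambda_i^2$, which in fact it cannot since the minimum over the closed simplex is already interior) to close that gap; this is precisely the step the paper's ``standard calculus'' remark sweeps under the rug. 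The only pedantic point you might add is that when $n(\psi)=1$ both sides are $0$, so equality holds trivially for a factorized state, whereas the paper's definition of ``maximally entangled'' requires index $r\ge 2$; the equality assertion is therefore best read as applying when $n(\psi)\ge 2$.
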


\begin{exam}{4}  % Example 4
Let $\brac{\phi _1,\phi _2,\phi _3},\brac{\psi _1,\psi _2,\psi _3}$ be orthonormal bases for $H_1,H_2$, respectively. Define the following states:
\begin{align*}
\alpha&=\tfrac{1}{\sqrt{2\,}}\,\phi _1\otimes\psi _1+\tfrac{1}{\sqrt{2\,}}\,\phi _2\otimes\psi _2\\
\beta&=\tfrac{1}{\sqrt{3\,}}\,\phi _1\otimes\psi _1+\tfrac{1}{\sqrt{3\,}}\,\phi _2\otimes\psi _2+\tfrac{1}{\sqrt{3\,}}\,\phi _3\otimes\psi _3\\
\gamma&=\tfrac{1}{\sqrt{2\,}}\,\phi _1\otimes\psi _1+\tfrac{1}{\sqrt{3\,}}\,\phi _2\otimes\psi _2
    +\tfrac{1}{\sqrt{6\,}}\,\phi _3\otimes\psi _3\\
\delta&=\tfrac{1}{3}\,\phi _1\otimes\psi _1+\tfrac{1}{3}\,\phi _2\otimes\psi _2+\sqrt{\tfrac{7}{9}\,}\,\phi _3\otimes\psi _3
\end{align*}
The distributions for these states are $\paren{\tfrac{1}{2},\tfrac{1}{2}}$, $\paren{\tfrac{1}{3},\tfrac{1}{3},\tfrac{1}{3}}$,
$\paren{\tfrac{1}{2},\tfrac{1}{3},\tfrac{1}{6}}$, $\paren{\tfrac{1}{9},\tfrac{1}{9},\tfrac{7}{9}}$, respectively. We see that $\alpha ,\beta$ are maximally entangled with indexes $2,3$, respectively and $e(\alpha )=\tfrac{1}{\sqrt{2\,}}$, $e(\beta )=\sqrt{\tfrac{2}{3}\,}$,
$e(\gamma )=\sqrt{\tfrac{11}{18}\,}$, $e(\delta )=\sqrt{\tfrac{30}{9}\,}$. Hence,
\begin{equation*}
e(\delta )<e(\alpha )<e(\gamma )<e(\beta )\rlap{$\qquad\qquad \Box$}
\end{equation*}
\end{exam}

We mentioned in Section~3 that the Schmidt decomposition can be hard to compute. This is especially true for large index $r$ and is similar to finding the eigenvalues for an $r\times r$ matrix. We conclude that finding $e(\psi )$ using \eqref{eq41} can be quite difficult. We now give an efficient method for finding $e(\psi )$ that applies to any orthonormal basis $\brac{\phi _i},\brac{\psi  _j}$ for $H_1,H_2$, respectively. Let $\psi =\sum c_{ij}\phi _i\otimes\psi _j$ and define the matrix $C=\sqbrac{c_{ij}}$. Denoting the adjoint of $C$ by $C^*$, we define the positive semidefinite, square matrix $\ab{C}=(C^*C)^{1/2}$. For a square matrix $A=\sqbrac{a_{ij}}$, we define the \textit{trace} of $A$ by $\rmtr (A)=\sum \limits_ia_{ii}$.

\begin{thm}    % Theorem 4.2
\label{thm42}
{\rm{(a)}}\enspace $e(\psi )=\sqbrac{1-\rmtr\paren{\ab{C}^4}}^{1/2}$.
{\rm{(b)}}\enspace $\psi$ is factorized if and only if $\rmtr\paren{\ab{C}^4}=1$.
{\rm{(c)}}\enspace We have that
\begin{equation*}
\rmtr\paren{\ab{C}^4}=\sum _{r,s}\sqbrac{\Big|\sum _i\overc _{ri}c_{si}\Big|^2}
\end{equation*}
\end{thm}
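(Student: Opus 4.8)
The plan is to recognize that the numbers $\lambda_k$ in the Schmidt decomposition \eqref{eq32} are exactly the nonzero eigenvalues of $\ab{C}^2=C^*C$; once this is secured, all three parts follow quickly. To establish it, I would start from a Schmidt decomposition $\psi=\sum_{k=1}^r\sqrt{\lambda_k\,}\,\tilde\phi_k\otimes\tilde\psi_k$, which exists by Theorem~\ref{thm32}, and expand the Schmidt vectors in the given bases, $\tilde\phi_k=\sum_i u_{ik}\phi_i$ and $\tilde\psi_k=\sum_j w_{jk}\psi_j$. Comparing with $\psi=\sum_{ij}c_{ij}\phi_i\otimes\psi_j$ gives $c_{ij}=\sum_k u_{ik}\sqrt{\lambda_k\,}\,w_{jk}$, where $\sqbrac{u_{ik}}$ and $\sqbrac{w_{jk}}$ have orthonormal columns because the Schmidt vectors are orthonormal. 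This is essentially a singular value decomposition of $C$ with singular values $\sqrt{\lambda_k\,}$, and from it one reads off that the nonzero eigenvalues of $C^*C=\ab{C}^2$ are exactly the $\lambda_k$; by the uniqueness of the singular values noted after \eqref{eq32} this identification is unambiguous.

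For part (a), the eigenvalues of $\ab{C}^4=(C^*C)^2$ are then the $\lambda_k^2$ together with zeros, which do not contribute to the trace, so $\rmtr\paren{\ab{C}^4}=\sum_{k=1}^r\lambda_k^2$. Substituting into the first expression in \eqref{eq41} gives $e(\psi)=\sqbrac{1-\rmtr\paren{\ab{C}^4}}^{1/2}$. Part (b) is then immediate from the remark following \eqref{eq41} that $\psi$ is factorized if and only if $e(\psi)=0$: by (a) this occurs exactly when $\rmtr\paren{\ab{C}^4}=1$.

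For part (c) I would compute the trace directly. By the cyclic property of the trace, $\rmtr\paren{\ab{C}^4}=\rmtr\paren{(C^*C)^2}=\rmtr\paren{(CC^*)^2}$. Expanding the last trace entrywise gives $\rmtr\paren{(CC^*)^2}=\sum_{r,s}(CC^*)_{rs}(CC^*)_{sr}$, and since $CC^*$ is Hermitian we have $(CC^*)_{sr}=\overline{(CC^*)_{rs}}$, so each summand equals $\ab{(CC^*)_{rs}}^2$. Finally $(CC^*)_{rs}=\sum_i c_{ri}\overc_{si}$, and conjugating inside the modulus leaves it unchanged, which yields $\rmtr\paren{\ab{C}^4}=\sum_{r,s}\ab{\sum_i\overc_{ri}c_{si}}^2$, as claimed.

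I expect the one genuine obstacle to be the opening identification: keeping the row and column indices and the complex conjugations straight while reading the singular value decomposition of $C$ off the Schmidt form, and confirming that the coefficient matrices built from the Schmidt vectors really do have orthonormal columns. After the eigenvalues of $\ab{C}^2$ are identified with the $\lambda_k$, parts (a)--(c) reduce to routine manipulations with traces of Hermitian matrices.
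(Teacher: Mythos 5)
Your proof is correct and follows essentially the same route as the paper: both rest on identifying the Schmidt coefficients $\sqrt{\lambda_k\,}$ with the singular values of $C$, so that $\rmtr\paren{\ab{C}^4}=\sum_k\lambda_k^2$, after which (a) follows from \eqref{eq41}, (b) from the remark that $\psi$ is factorized iff $e(\psi)=0$, and (c) from a direct trace computation. The only differences are cosmetic: you actually justify the identification of the singular values of $C$ with the Schmidt coefficients (the paper merely asserts, after writing $C=UDV$, that ``these singular-values coincide with those given in \eqref{eq32}''), and in (c) you use cyclicity of the trace and Hermiticity of $CC^*$ where the paper expands the diagonal entries of $\ab{C}^4$ directly in indices.
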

\begin{proof}
(a)\enspace By the singular-value theorem \cite{nc00}, we can write $C=UDV$ where $U,V$ are unitary matrices and $D$ is a diagonal matrix
\begin{equation*}
D=\rmdiag(\lambda _1^{1/2},\lambda _2^{1/2},\ldots ,\lambda _n^{1/2})
\end{equation*}
with $\lambda _i\ge 0$ and $\sqrt{\lambda _i\,}$ are the singular-values of $C$. These singular-values coincide with those given in \eqref{eq32}. Now
\begin{equation*}
\ab{C}^2=C^*C=V^*DU^*UDV=V^*D^2V
\end{equation*}
and hence, $\ab{C}^4=V^*D^4V$. Therefore
\begin{equation*}
\sum\lambda _i^2=\rmtr (D^4)=\rmtr\paren{\ab{C}^4}
\end{equation*}
We conclude from \eqref{eq41} that (a) holds.
(b)\enspace follows from (a). To verify (c) we have that 
\begin{align*}
\ab{C}_{ij}^4&=(C^*C)(C^*C)_{ij}=\sum _k(C^*C)_{ik}(C^*C)_{kj}\\
  &=\sum _{k,r}C_{ir}^*C_{rk}\sum C_{ks}^*C_{sj}=\sum _{r,s,k}\overc _{ri}c_{rk}\overc _{sk}c_{sj}
\end{align*}
Hence,
\begin{align*}
\rmtr\paren{\ab{C}^4}&=\sum _i\ab{C}_{ii}^4=\sum _{i,k,r,s}\overc _{ri}c_{rk}\overc _{sk}c_{si}\\
 &=\sum _{r,s}\sqbrac{\sum _i\overc _{ri}c_{si}\sum _k\overline{\overc _{rk}c_{sk}}}
   =\sum _{r,s}\sqbrac{\Big|\sum _i\overc_{ri}c_{si}\Big|^2}\qedhere
\end{align*}
\end{proof}

\begin{exam}{5}  % Example 5
Let $H$ be a 2-dimensional Hilbert space with orthonormal basis $\brac{\phi _1,\phi _2}$ and let $\psi\in H\otimes H$ be the state
\begin{equation*}
\psi =\tfrac{1}{\sqrt{10}\,}\,(\phi _1\otimes\phi _1-2i\phi _1\otimes\phi _2+\phi _2\otimes\phi _1-2i\phi _2\otimes\phi _2)
\end{equation*}
We then have that
\begin{equation*}
C=\frac{1}{\sqrt{10\,}}\begin{bmatrix}1&-2i\\1&-2i\\\end{bmatrix}
\end{equation*}
Applying Theorem~4.2(c) gives
\begin{align*}
\rmtr\paren{\ab{C}^4}&=\sum _{r,s=1}^2\ab{\overc _{r1}c_{s1}+\overc _{r2}c_{s2}}^2\\
  &=\paren{\ab{c_{11}}^2+\ab{c_{12}}^2}+\paren{\ab{c_{21}}^2+\ab{c_{22}}^2}+2\ab{\overc _{11}c_{21}+\overc _{21}c_{22}}^2\\
  &=\frac{1}{100}\sqbrac{(1+4)^2+(1+4)^2+2(1+4)^2}=1
\end{align*}
It follows from Theorem~4.2(b) that $\psi$ is factorized. In fact, $\psi =\alpha\otimes\beta$ where
$\alpha =\tfrac{1}{\sqrt{2\,}}\,(\phi _1+\phi _2)$, $\beta =\tfrac{1}{\sqrt{5\,}}\,(\phi _1-2i\phi _2)$.\hfill\qedsymbol
\end{exam}

\begin{exam}{6}  % Example 6
We change Example~5 slightly and let
\begin{equation*}
\psi =\tfrac{1}{\sqrt{10\,}}\,(\phi _1\otimes\phi _1=2i\phi _1\otimes\phi _2+\phi _2\otimes\phi _1+2i\phi _2\otimes\phi _2)
\end{equation*}
We then have 
\begin{equation*}
C=\frac{1}{\sqrt{10\,}}\begin{bmatrix}1&-2i\\1&2i\\\end{bmatrix}
\end{equation*}
As in Example~5
\begin{equation*}
\rmtr\paren{\ab{C}^4}=\tfrac{1}{100}\sqbrac{(1+4)^2+(1+4)2+2(1-4)^2}=\tfrac{17}{25}
\end{equation*}
We conclude that $\psi$ is entangled with
\begin{equation*}
e(\psi )=\paren{1-\frac{17}{25}}^{1/2}=\frac{2\sqrt{2\,}}{5}\rlap{$\qquad\qquad \Box$}
\end{equation*}
\end{exam}

 In the previous two examples, we used Theorem~4.2(c) to find $e(\psi )$. However, it is usually easier just to find $\ab{C}^4$ directly and use Theorem~4.1(a) as the following example shows.

\begin{exam}{7}  % Example 7
For $H$ defined as in Examples~5 and~6, let
\begin{equation*}
\psi =\tfrac{\sqrt{3\,}}{2}\,\phi _1\otimes\phi _1+\tfrac{1}{2\sqrt{3\,}}\,\phi _1\otimes\phi _2
  +\tfrac{1}{2\sqrt{3\,}}\,\phi _2\otimes\phi _1+\tfrac{1}{2\sqrt{3\,}}\,\phi _2\otimes\phi _2
\end{equation*}
The corresponding matrix becomes
\begin{equation*}
C=\frac{1}{2\sqrt{3\,}}\begin{bmatrix}3&1\\1&1\\\end{bmatrix}
\end{equation*}
Hence,
\begin{equation*}
\ab{C}^2=\frac{1}{6}\begin{bmatrix}5&2\\2&1\\\end{bmatrix},\quad\ab{C}^4=\frac{1}{36}\begin{bmatrix}29&12\\12&5\\\end{bmatrix}
\end{equation*}
We conclude that $\rmtr\paren{\ab{C}^4}=17/18$ and hence,
\begin{equation*}
e(\psi )=\paren{1-\frac{17}{18}}^{1/2}=\frac{1}{3\sqrt{2\,}}\rlap{$\qquad\qquad \Box$}
\end{equation*}
\end{exam}

\end{document}